\documentclass[11pt,letterpaper]{article}
\usepackage[margin=1in]{geometry}
\usepackage{microtype,caption,titlesec,placeins,needspace}
\titleformat{\section}{\large\bfseries}{\thesection}{.7em}{}
\titleformat{\subsection}{\normalsize\bfseries}{\thesubsection}{.7em}{}
\usepackage[T1]{fontenc}
\usepackage[utf8]{inputenc}
\usepackage{amsmath,amssymb,amsthm,graphicx,booktabs,url,algpseudocode,float}
\usepackage{xcolor}
\usepackage[colorlinks=true,linkcolor=black,citecolor=blue!50!black,urlcolor=blue!50!black]{hyperref}
\newtheorem{proposition}{Proposition}
\newcounter{paperalgorithm}
\newfloat{algorithmfloat}{tbp}{loa}
\newenvironment{paperalgorithm}[2]{\begin{algorithmfloat}[!htbp]\small\refstepcounter{paperalgorithm}\label{#2}\hrule\smallskip\textbf{Algorithm \thepaperalgorithm. #1}\par\smallskip}{\smallskip\hrule\end{algorithmfloat}}
\algrenewcommand\algorithmicrequire{\textbf{Input:}}
\algrenewcommand\algorithmicensure{\textbf{Output:}}
\newcommand{\RR}{\mathbb{R}}
\newcommand{\norm}[1]{\left\lVert#1\right\rVert}
\graphicspath{{figures/}}
\hypersetup{pdftitle={BRiDCT: Fast Two-Dimensional DCTs Using SIMD: SIMD Organization, Register Blocking, and Numerical Verification},pdfauthor={Antoine Moevus and Max Mignotte}}
\begin{document}
\raggedbottom
\widowpenalty=10000
\clubpenalty=10000
\begin{center}
{\Large BRiDCT: Fast Two-Dimensional DCTs Using SIMD\par}
\vspace{.4em}
{\large SIMD Organization, Register Blocking,\\and Numerical Verification\par}
\vspace{1.2em}
Antoine Moevus \quad Max Mignotte\\[.4em]
Department of Computer Science and Operations Research\\
Universit\'e de Montr\'eal\\[.7em]
\textbf{Technical Report}\\[.7em]
September 2026
\end{center}
\vspace{.7em}
\begin{abstract}
Many image-processing methods repeatedly compute two-dimensional discrete cosine transforms (DCTs). Fewer arithmetic operations do not necessarily make a DCT faster: execution time also depends on how calculations use the processor's single-instruction, multiple-data (SIMD) units, registers and memory. We implement and compare established DCT algorithms in SIMD to accelerate forward, inverse and round-trip transforms without approximating the transform matrix. The resulting C11 library, BRiDCT, combines the Shao--Johnson factorization with an organization of vector calculations and intermediate storage adapted to array size.

On one Apple M3 Max, BRiDCT is approximately 1.3--16.0 times faster than the tested general-purpose implementations---Apple Accelerate (vDSP), FFTW and Ooura---across their available single-precision (float32) cases from $8\times8$ to $256\times256$. It also outperforms the tested libjxl and floating-point libjpeg-turbo routines. Through its compiled NumPy interface, BRiDCT outperforms OpenCV in all tested cases, including squares up to $1024\times1024$ and two rectangular shapes, with Python-call and allocation costs included. Additional comparisons cover SciPy, DUCC and pyFFTW. These gains concern the tested interfaces and hardware; larger transforms called directly from C show size-dependent limits.

A publicly available verification dataset of 474 input arrays supports independent accuracy checks and exposes numerical limits at extreme amplitudes and for weak spectral components.\footnote{Code, versioned releases, dataset and recorded measurements: \url{https://github.com/antmoev/bridct}. See \hyperref[sec:availability]{Supplementary Material, Code and Data Availability} below.} The contribution is a faster implementation of the mathematical DCT through SIMD and memory organization, supported by measured performance and numerical verification.
\end{abstract}
\noindent\textbf{Keywords:}
Discrete cosine transform, image processing, image and video compression, image restoration, remote sensing, phase unwrapping, signal processing, SIMD, ARM NEON.

\section{Introduction}
\subsection{The computational problem}
Many image-processing and numerical methods repeatedly apply a two-dimensional discrete cosine transform (DCT) and its inverse. We seek a fast float32 implementation for the six power-of-two square sizes from $8\times8$ to $256\times256$, with ordinary row-major input and output. The tasks are the forward DCT, its inverse, and their normalized round trip. A fast $8\times8$ routine alone does not solve this problem across all six sizes.

Fewer arithmetic operations do not guarantee shorter CPU time. SIMD units process several values together, independent instructions can overlap, and data access depends on registers and caches. Arithmetic savings can be offset by data rearrangement, longer dependency chains or temporary values stored and reloaded. A method with more operations may therefore run faster, and the preferred organization can change with size and architecture. FFTW documents this gap \cite{frigo2005design}. Our approach is to compare factorizations together with their SIMD and memory organization.

Image restoration and remote sensing illustrate this demand. Mignotte uses repeated $8\times8$ DCTs \cite{mignotte2010fusion}, while Touati, Mignotte, and Dahmane use $16\times16$ DCT descriptors for multimodal change detection \cite{touati2020multimodal}. Both use Ooura's C routines.

Our immediate motivation is phase unwrapping, where the local least-squares solver of Moevus and Mignotte uses repeated $16\times16$ forward and inverse transforms \cite{moevus2026tilebased}. BRiDCT also covers $8\times8$ blocks and larger arrays to serve the broader range of workloads above. We isolate the DCT from the surrounding reconstruction. Inputs may be whole images or application-supplied blocks.

We preserve the exact mathematical DCT. Here, \emph{exact} excludes coefficient truncation and deliberately approximate transform matrices. Haweel's signed DCT, for example, replaces cosine coefficients by their signs to avoid multiplications \cite{haweel2001new}. Exact does not mean bit-exact real arithmetic: floating-point errors and range limits are tested separately. We implement established algorithms in SIMD and select execution strategies by measured time subject to numerical verification. Section~\ref{sec:extensions} extends the primary size range to larger squares and rectangles.

\subsection{Existing algorithms and implementations}
FFTW generates computational kernels and selects execution plans \cite{frigo2005design}. Apple's Accelerate framework provides optimized DCT routines through vDSP.\footnote{Apple vDSP documentation: \url{https://developer.apple.com/documentation/accelerate/vdsp}.} Ooura's public C package includes general transforms and specialized $8\times8$ and $16\times16$ routines \cite{ooura2001general}. Image codecs offer further practical references: libjpeg-turbo supplies JPEG transform routines, and libjxl uses Highway SIMD. Their supported sizes, precision and layouts determine the comparable cases.

\paragraph{Arithmetic complexity and CPU time.}
Direct length-$n$ matrix evaluation uses $n^2$ multiplications and $n(n-1)$ additions. For power-of-two DCT-II sizes, Shao and Johnson reduce the established unitary count $2n\log_2 n-n+2$ to a scaled split-radix construction with leading term $(17/9)n\log_2 n+O(n)$ \cite{shao2008typeii}. Their tabulated totals decrease from 114 to 112 real additions/multiplications at $n=16$, and from 3842 to 3708 at $n=256$. These are scalar arithmetic counts, not CPU instructions. A separable square 2D transform applies $2n$ such one-dimensional transforms, with additional layout and normalization choices.

\paragraph{Code generation and evaluated implementations.}
SPIRAL automatically generates and searches vectorized transform implementations for a target architecture \cite{franchetti2002simd}. It establishes prior work on joint algorithm and SIMD optimization. BRiDCT uses our own Shao--Johnson generators and execution policies, not SPIRAL-generated code. We built the public SPIRAL generator, but the generation recipes tried did not yield a validated float32 NEON DCT for our protocol. SPIRAL is therefore absent from the timing comparison. Supplementary Section S1 documents this incomplete integration.

We implemented SIMD versions of Ooura and Shao--Johnson cores and compared alternative lane arrangements and generated arithmetic graphs. Our \emph{Ooura NEON} adaptation remains a separate small-block reference. Table~\ref{tab:coverage} distinguishes these study-specific implementations from public software.

\subsection{Approach and contribution}
BRiDCT (Banded and Register-blocked implementation of the DCT) combines four-row bands, padded storage and register blocking with Shao--Johnson. Small kernels retain simpler organizations when faster. Our contribution is the resulting implementation and an evaluation that connects complete source changes to measured gains, while checking numerical errors independently.

All primary results evaluate one fixed release on an Apple M3 Max. Figure~\ref{fig:organization} explains its SIMD organization. Section~\ref{sec:timing} evaluates native performance and Section~\ref{sec:publicresults} evaluates complete Python calls. Section~\ref{sec:validation} examines numerical limits. Earlier variants appear only in development ablations.

\section{Transform Contract and Algebraic Basis}
\subsection{Forward, inverse, and round trip}
For a side length $n$, define the orthonormal DCT-II matrix $Q_n\in\RR^{n\times n}$ by
\begin{equation}
 \begin{gathered}
 (Q_n)_{kj}=\alpha_k\cos\!\left[\frac{\pi}{n}\left(j+\frac12\right)k\right],\\
 \alpha_0=n^{-1/2},\qquad \alpha_k=(2/n)^{1/2}\quad(k>0).
 \end{gathered}
 \label{eq:q}
\end{equation}
For an input array $X$, the forward transform and its inverse are
\begin{equation}
 \mathcal C(X)=Q_nXQ_n^{\mathsf T},\qquad
 \mathcal C^{-1}(Y)=Q_n^{\mathsf T}YQ_n.
 \label{eq:2d}
\end{equation}
Equations~\eqref{eq:q}--\eqref{eq:2d} define the exact transform targeted by all kernels. Their factorizations and data permutations preserve this map in real arithmetic. Finite-precision outputs need not be identical across operation orders or fused multiply-add (FMA) settings. The inverse is the adjoint under the Frobenius inner product. We use \emph{inverse} throughout to distinguish this operator from a memory transpose, which merely exchanges array indices. The timed round trip is $\mathcal C^{-1}(\mathcal C(X))$: its exact result is $X$. It includes no nontrivial spectral filter and no Poisson solve.

Every comparison must deliver the same output normalization. An unnormalized DCT-II/DCT-III pair cannot be compared directly with \eqref{eq:2d} without accounting for its scale. For example, FFTW's REDFT10 and REDFT01 are unnormalized, with a pair factor $2n$ in one dimension \cite{frigo2005design}. The tested adapters absorb the necessary scale factors into their output or intermediate correction tables. Impulse responses check these factors before timing.

\subsection{Scaled cores and adjoints}
Write $Q_n=D_nG_n$, with scaled core $G_n$ and diagonal normalization $D_n$. For vectorized arrays, define
\begin{equation}
 H_n=G_n\otimes G_n,\qquad E_n=D_n\otimes D_n.
 \label{eq:scaled}
\end{equation}
The Kronecker products in \eqref{eq:scaled} give the 2D forward operator $U_n=E_nH_n$ and inverse $U_n^{\mathsf T}=H_n^{\mathsf T}E_n$.

\begin{proposition}[Folding diagonal scales]\label{prop:scale}
For any diagonal spectral operator $\Lambda$, the normalized transform--operator--inverse composition satisfies
\begin{equation}
 U_n^{\mathsf T}\Lambda U_n
 =H_n^{\mathsf T}(E_n\Lambda E_n)H_n.
 \label{eq:fold}
\end{equation}
In particular, the identity round trip requires only the diagonal correction $E_n^2$ between the unnormalized core and its adjoint.
\end{proposition}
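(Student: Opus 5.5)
The identity \eqref{eq:fold} follows directly from the definitions in \eqref{eq:scaled}, so the plan is a short algebraic verification.

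\textbf{Step 1: transpose of $U_n$.} Starting from $U_n=E_nH_n$, I compute $U_n^{\mathsf T}=H_n^{\mathsf T}E_n^{\mathsf T}$. Since $D_n$ is diagonal, the mixed-product and transpose rules for Kronecker products give
\[
E_n^{\mathsf T}=(D_n\otimes D_n)^{\mathsf T}=D_n^{\mathsf T}\otimes D_n^{\mathsf T}=D_n\otimes D_n=E_n .
\]
This recovers the stated inverse $U_n^{\mathsf T}=H_n^{\mathsf T}E_n$.

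\textbf{Step 2: substitution.} Associativity of matrix multiplication then yields
\[
U_n^{\mathsf T}\Lambda U_n=H_n^{\mathsf T}E_n\Lambda E_nH_n=H_n^{\mathsf T}(E_n\Lambda E_n)H_n,
\]
which is \eqref{eq:fold}. Diagonality of $\Lambda$ is not needed for this identity. It matters only for the practical remark: $E_n\Lambda E_n$ is then itself diagonal, namely $E_n^2\Lambda$, so it can be applied as a pointwise table between the core and its adjoint.

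\textbf{Step 3: the round-trip corollary.} Taking $\Lambda=I$ gives $U_n^{\mathsf T}U_n=H_n^{\mathsf T}E_n^2H_n$, so only the diagonal correction $E_n^2=D_n^2\otimes D_n^2$ is required. I will also check that this composition equals the identity. Because $Q_n$ is orthogonal, $Q_n\otimes Q_n$ is orthogonal. The identification $U_n=(D_nG_n)\otimes(D_nG_n)=Q_n\otimes Q_n$ follows from the mixed-product property, and it is consistent with $\mathcal C(X)=Q_nXQ_n^{\mathsf T}$ via $\mathrm{vec}(AXB^{\mathsf T})=(B\otimes A)\mathrm{vec}(X)$, under either vectorization convention because both factors equal $Q_n$.

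The only point needing care is this vectorization and Kronecker bookkeeping, which shows that $E_n$ really is symmetric and diagonal. The rest is routine.
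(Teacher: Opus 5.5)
Your proof is correct and follows the paper's argument: substitute $U_n=E_nH_n$ and $U_n^{\mathsf T}=H_n^{\mathsf T}E_n$, use associativity, and invoke orthonormality with $\Lambda=I$. Your extra checks, namely that $E_n$ is symmetric and that $U_n=Q_n\otimes Q_n$, make explicit what the paper's short proof assumes.
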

\begin{proof}
Substitute $U_n=E_nH_n$ and $U_n^{\mathsf T}=H_n^{\mathsf T}E_n$. Associativity gives \eqref{eq:fold}; the middle factor is diagonal because all three factors are diagonal. With $\Lambda=I$, orthonormality gives $H_n^{\mathsf T}E_n^2H_n=I$.
\end{proof}
Proposition~\ref{prop:scale} lets forward scaling be applied in final stores and round-trip scaling through one correction table. Incorporating a solver's diagonal operator is possible algebraically, but no solver speedup is measured here.

\begin{proposition}[Adjoint of a generated linear graph]\label{prop:adjoint}
Consider a finite acyclic arithmetic graph whose nodes are additions and multiplications by constants. Reverse its data dependencies, replace each addition by accumulation into its input adjoints, and propagate through a constant multiplication using the same constant. The resulting graph evaluates the transpose of the original linear map in exact arithmetic.
\end{proposition}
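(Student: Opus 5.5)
I would prove Proposition~\ref{prop:adjoint} by factoring the graph into elementary linear steps and applying the rule $(A_m\cdots A_1)^{\mathsf T}=A_1^{\mathsf T}\cdots A_m^{\mathsf T}$.

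Fix a topological order of the nodes. Model the computation as acting on a state vector $s\in\RR^N$ with one slot per node. Input slots hold the input values, and all other slots start at zero. Processing node $v$ in topological order is then one elementary linear map $A_v$ that overwrites slot $v$:
\begin{itemize}
\item for an addition with operands $u_1,u_2$, set $s_v\leftarrow s_{u_1}+s_{u_2}$;
\item for a constant multiplication with operand $u$, set $s_v\leftarrow c\,s_u$.
\end{itemize}
Since slot $v$ is zero before it is written, $A_v=I+e_v(e_{u_1}+e_{u_2})^{\mathsf T}$ or $A_v=I+c\,e_ve_u^{\mathsf T}$. Fan-out is automatic, because later nodes simply read $s_v$.

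The full map is $M=P_{\mathrm{out}}A_{v_m}\cdots A_{v_1}P_{\mathrm{in}}$. Here $P_{\mathrm{in}}$ embeds the inputs into their slots and $P_{\mathrm{out}}$ selects the output slots; these are coordinate injection and projection, so they are transposes of each other. Hence
\[
M^{\mathsf T}=P_{\mathrm{in}}^{\mathsf T}A_{v_1}^{\mathsf T}\cdots A_{v_m}^{\mathsf T}P_{\mathrm{out}}^{\mathsf T}.
\]
Read left to right, this says: load the output adjoints into zero-initialized adjoint slots, process nodes in \emph{reverse} topological order applying $A_v^{\mathsf T}$, and read off the input-slot adjoints.

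The key computation is the meaning of each $A_v^{\mathsf T}$.
\begin{itemize}
\item For an addition, $A_v^{\mathsf T}=I+(e_{u_1}+e_{u_2})e_v^{\mathsf T}$, which acts as $\bar s_{u_i}\mathrel{+}=\bar s_v$ for $i=1,2$. This is exactly ``accumulation into its input adjoints.''
\item For a constant multiplication, $A_v^{\mathsf T}=I+c\,e_ue_v^{\mathsf T}$, which acts as $\bar s_u\mathrel{+}=c\,\bar s_v$, using the same constant.
\end{itemize}
Reversing the order of the factors is the reversal of data dependencies. Accumulation with $\mathrel{+}=$ arises because a node read by several consumers has its adjoint summed over all of them.

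I expect the main obstacle to be bookkeeping rather than algebra, in three places.
\begin{itemize}
\item The overwrite convention. If a slot held a nonzero value before being written, then $A_v$ would not have the $I+\text{rank-one}$ form. I would therefore insist on single assignment with zero initialization.
\item Repeated operands. An addition $u+u$ gives $A_v^{\mathsf T}$ contributing $2\bar s_v$ to $\bar s_u$, which the accumulation rule handles correctly.
\item Intermediate adjoints. In the transposed product, a slot $v$ is fully updated by its consumers before $A_v^{\mathsf T}$ is applied. Otherwise a stale intermediate adjoint would be propagated. The reverse order guarantees this, since every consumer of $v$ comes after $v$ topologically.
\end{itemize}
Exactness follows because every step is an identity of real matrices, with no rounding assumed.
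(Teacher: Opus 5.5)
Your proof is correct, and it takes a different route from the paper's. The paper argues through the bilinear form. The scalar identities $\bar z(a+b)=\bar z a+\bar z b$ and $\bar z(ca)=(c\bar z)a$ show that eliminating nodes in reverse topological order preserves $\langle \bar y, Mx\rangle$, written as a sum of weight times value over the nodes not yet eliminated. Once only inputs remain, the weights must therefore be $M^{\mathsf T}\bar y$.

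You instead embed the program in a state space and factor it as $P_{\mathrm{out}}A_{v_m}\cdots A_{v_1}P_{\mathrm{in}}$ with rank-one updates. The adjoint rules then follow from $(AB)^{\mathsf T}=B^{\mathsf T}A^{\mathsf T}$ and the transposes of the elementary factors. The two arguments are equivalent: the paper's invariant is exactly the statement that $\langle A_{v_{k+1}}^{\mathsf T}\cdots A_{v_m}^{\mathsf T}P_{\mathrm{out}}^{\mathsf T}\bar y,\ A_{v_k}\cdots A_{v_1}P_{\mathrm{in}}x\rangle$ is independent of $k$. The eliminated slots contribute nothing because their forward values are still zero.

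Each version has its own advantage. The paper's is shorter and uses only the defining property of the transpose, with no overwrite convention to fix. But it leaves implicit what quantity is preserved and why fan-out forces accumulation. Yours makes these points mechanical. Zero initialization, summation over several consumers, repeated operands such as $u+u$, and the fact that each adjoint is complete before it is propagated all follow from the algebra of the factors.

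Two small wording slips do not affect the argument. First, the adjoint program applies the factors of $P_{\mathrm{in}}^{\mathsf T}A_{v_1}^{\mathsf T}\cdots A_{v_m}^{\mathsf T}P_{\mathrm{out}}^{\mathsf T}$ from right to left, not left to right, although the procedure you describe is the correct one. Second, $P_{\mathrm{in}}$ and $P_{\mathrm{out}}$ are not transposes of each other. You only need that the transpose of a coordinate injection is the matching projection.
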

\begin{proof}
For a local addition $z=a+b$, an output weight $\bar z$ contributes $\bar z a+\bar z b$ to the inner product, so it adds the same weight to both input adjoints. For $z=ca$, the contribution is $(c\bar z)a$. Applying these identities in reverse topological order preserves the inner product at every eliminated node. The final input weights therefore equal the matrix transpose applied to the initial output weights.
\end{proof}
The generator uses Proposition~\ref{prop:adjoint} for adjoint cores. Floating-point rounding, underflow and overflow still require tests.

\section{Implementation Organization}
\subsection{Two SIMD passes on a natural-layout array}
A separable 2D DCT applies one-dimensional transforms along both axes. The banded BRiDCT route uses four independent transforms per 128-bit NEON vector: four neighboring columns in the vertical pass, then four rows packed into a band in the horizontal pass. Figure~\ref{fig:organization} illustrates the change of lane assignment. The principal interface accepts and returns an ordinary row-major array.

Each four-lane vector processes four float32 values with one instruction, such as NEON's \texttt{vaddq\_f32} \cite{arm2026arm}. Data movement and dependencies prevent equating this with a guaranteed fourfold speedup.

\begin{figure}[!htbp]
\centering\includegraphics[width=.98\textwidth]{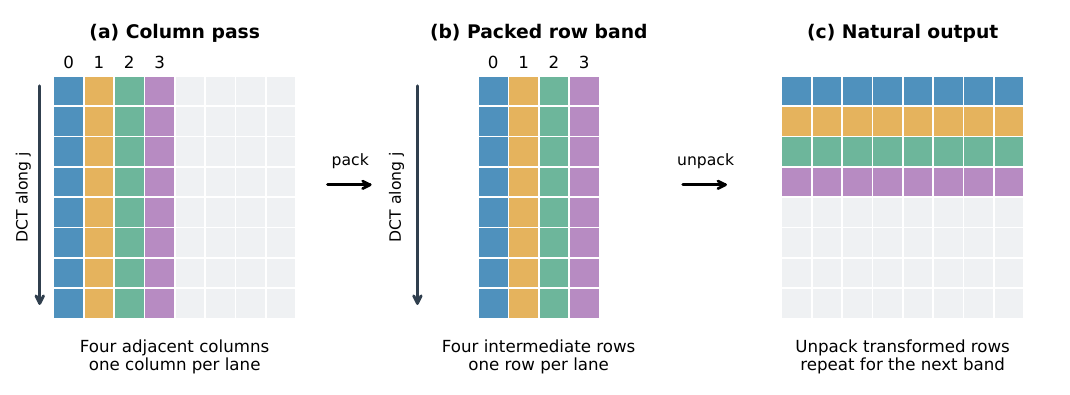}
\caption{Conceptual $8\times8$ example of the banded route; selected small kernels can use other organizations. Adjacent columns occupy the lanes in the vertical pass. Four intermediate rows are then packed as $T[j,\ell]=Z[r+\ell,j]$, transformed along $j$, and unpacked to row-major output. Colors identify lanes within each panel. The band occupies scratch memory; this is not a register-only transpose. Padding and scaling appear in Algorithm~\ref{alg:forward}.}\label{fig:organization}
\end{figure}

The vertical pass loads neighboring columns directly. The horizontal pass packs and unpacks four rows with interleaving loads/stores, including NEON \texttt{ld4}/\texttt{st4}-type operations. This removes a full-array transpose, but not all rearrangement in memory. Algorithm~\ref{alg:forward} gives the padded path. Smaller policies omit padding when it costs more than it saves.

\begin{paperalgorithm}{Natural-layout padded forward DCT}{alg:forward}
\begin{algorithmic}[1]
\Require Row-major $X\in\RR^{n\times n}$; preallocated scratch; scaled core $G_n$; diagonal factors $D_n$
\Ensure Row-major $Y=Q_nXQ_n^{\mathsf T}$ in exact arithmetic
\State Copy each input row into scratch with stride $n+4$
\For{each group of four columns}
 \State Apply $G_n$ down the columns, one column per SIMD lane
\EndFor
\For{each group of four intermediate rows}
 \State Pack the rows into a contiguous four-lane band
 \State Apply $G_n$ along that band, one row per SIMD lane
 \State Unpack to $Y$, applying the two-dimensional scale factors
\EndFor
\end{algorithmic}
\end{paperalgorithm}
Packing only permutes values, and each lane evaluates the same scalar transform on a different row or column. The two passes therefore implement \eqref{eq:2d}. The inverse uses the transposed core and corresponding input scales. The native round trip folds the intermediate correction using Proposition~\ref{prop:scale}.

\subsection{Blocking, padding, and specialization}
Expanded code exposes constants and avoids recursive calls, but large transforms can keep more values live than registers can hold. The resulting \emph{spills} store and reload temporary values. BRiDCT's blocked variant limits generated leaves to at most 16 points and expresses larger combinations with loops and coefficient tables organized by recursion level. This changes execution organization, not the underlying Shao--Johnson factorization.

Padded variants use a stride of $n+4$ floats. A power-of-two row stride can repeatedly visit a small subset of cache sets. Perturbing that stride can distribute accesses differently. This motivates the intervention, but timing alone does not identify the processor's physical cache-indexing function or isolate cache-conflict costs. Padding also adds scratch capacity and copies, so it is not a memory-footprint reduction.

The distributed policy fixes the variant and FMA setting for each size and direction before measurement. Generated kernels specialize loop bounds and coefficients, while every path retains the same DCT. Expanded, banded, padded and blocked forms remain useful at different sizes. Native calls reuse plan-owned tables and workspace, with no allocation or planning inside the transform. Python batches repeat these calls through a C loop. Supplementary Sections S1--S2 specify the workloads, and Section S5 evaluates the inverse-$16\times16$ refinement.

\section{Experimental Protocol}
\subsection{Native comparisons and provenance}\label{sec:nativeprotocol}
The primary timings use one Apple M3 Max. BRiDCT and its native harness are C11/NEON compiled with Apple Clang~21.0.0. Third-party binaries have their own recorded build configurations. Table~\ref{tab:coverage} distinguishes external software from adaptations made in this study. All comparisons require float32 orthonormal outputs. Input copies required by the measured adapters count toward runtime, while preparation and planning are excluded. All primary comparisons evaluate the final distributed BRiDCT configuration, version 0.2.0-dev.7, without selecting policies from these measurements. The native workload uses one worker and a rotating pool of four physical inputs, with one array per API call.

\begin{table}[!htbp]
\centering\small
\caption{Provenance and tested interfaces. Public software coverage refers to these routes, not every possible configuration.}\label{tab:coverage}
\begin{tabular}{p{.23\columnwidth}p{.67\columnwidth}}
\toprule
Implementation & Tested route\\\midrule
Apple vDSP & Accelerate; separable 1D calls and a 2D adapter, available sizes only.\\
FFTW & Public 2D real-to-real plans; adapter normalization; PATIENT or MEASURE as recorded.\\
Ooura & Public C routines; this study's NEON adaptation is identified separately.\\
libjxl & Public Highway SIMD route; one array per call, square sizes 8--256.\\
libjpeg-turbo & Public floating-point JPEG forward DCT; supported small-size case.\\
SciPy & Public \texttt{scipy.fft.dctn/idctn}; compiled PocketFFT backend.\\
DUCC & Public \texttt{ducc0.fft.dct}; type-II/type-III calls with orthonormal scaling.\\
OpenCV & Public \texttt{cv2.dct/idct}; one 2D image per call.\\
pyFFTW & FFTW through Python: cached builder and explicit real-to-real plan evaluated separately.\\
BRiDCT & Final Shao--Johnson C11 library; native plan API and compiled NumPy extension.\\
\bottomrule
\end{tabular}
\end{table}

FFTW uses true 2D plans. The vDSP and general Ooura adapters compose 1D calls and transposes. Integer JPEG transforms are excluded. Supplementary Section S1 specifies the sources and adapters.

Each of two fresh-process sessions uses 21 randomized blocks per arm with a 30\,ms calibration target. FFTW PATIENT planning is capped at one second per plan. Ratios and paired-bootstrap intervals use matched shapes and tasks within a session. All observations are retained, including blocks below the target. Known competing compiler and benchmark processes are checked, but exclusive desktop use is not guaranteed.

\subsection{Python calls and native execution}\label{sec:pythonprotocol}
The Python experiments measure the time an application pays for a complete call through the final compiled NumPy extension. They include dispatch, argument validation, the requested output allocation, native execution, normalization, copies and destruction of allocated outputs. A Python clock surrounds repeated calls. Elapsed time is divided by repetitions and the number of images processed. Inputs are contiguous float32 arrays. Imports, input generation, explicit planning and warmup precede timing. BRiDCT reuses its plan and workspace.

One comparison covers SciPy, DUCC and cached/planned pyFFTW. It uses ten shapes---eight square sides from 8 to 1024, plus $16\times32$ and $8\times64$---batches of one/four, and forward, inverse and materialized round trips, giving 60 cases per reference. All interfaces return fresh output. Each round trip uses separate forward and inverse calls and materializes the orthonormal spectrum. SciPy uses compiled PocketFFT \cite{virtanen2020scipy}; DUCC uses orthonormal type-II/type-III transforms; the two pyFFTW routes use cached convenience calls or reused 2D MEASURE plans, including timed copies and float32 normalization.

A separate OpenCV comparison covers 130 contracts over the same shapes: direct and batched calls, allocated and supplied outputs, and one- and two-entry round trips. BRiDCT loops over batches in C. OpenCV is called once per image per direction. These are practical interface comparisons with documented call boundaries, not identical wrappers or pure native-kernel timings. The one-entry BRiDCT round trip still materializes the spectrum. It is distinct from the fused native identity pair.

Each comparison uses two fresh-process sessions, 21 randomized paired blocks and a 30\,ms calibration target. Numerical validation precedes timing. One worker is requested. OpenCV's build-dependent thread query does not measure actual worker activity. Supplementary Section S2 records the packages, contracts and block durations. No estimated Python-call cost is subtracted. Native C and Python measurements remain separate, and absolute times from the two Python comparisons are not pooled.

\section{Timing Results}\label{sec:timing}
\subsection{Native performance and implementation choices}
Figure~\ref{fig:natural} and Table~\ref{tab:times} evaluate the final library at the six primary square sizes. BRiDCT is faster than FFTW PATIENT and general Ooura in all 18 size/task cases, and Apple vDSP in all 15 available cases, in each session. Ratios span $1.31$--$16.04$. It is also faster than the study-specific Ooura SIMD adaptation in 6/6 small-block cases in session~1 and 6/6 in session~2. It is also faster than its argument-check-free control in 6/6 small-block cases in session~1 and 6/6 in session~2.

\begin{figure}[!htbp]
\centering\includegraphics[width=.98\textwidth]{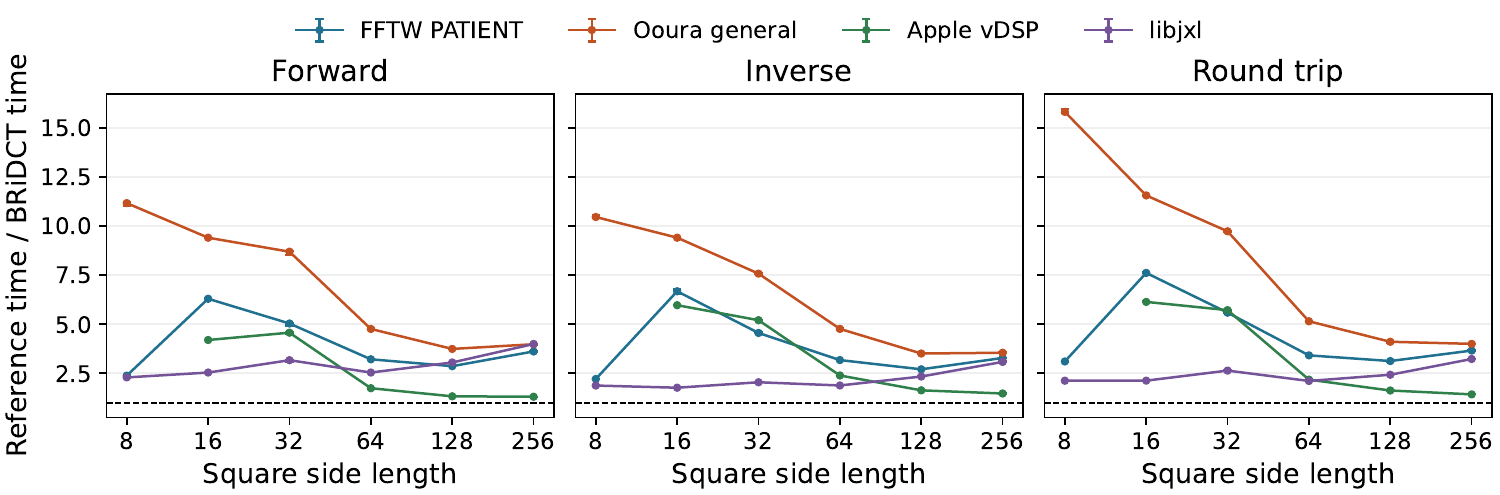}
\caption{Final BRiDCT configuration: session-2 reference/BRiDCT ratio of median native times; above one favors BRiDCT. The same fixed policy is used in both sessions. Bars are descriptive paired-bootstrap 95\% intervals over 21 blocks. One array is processed per call from a four-array pool, with natural layout and adapter copies included. vDSP has no size-8 measurement.}\label{fig:natural}
\end{figure}
\begin{table}[!htbp]
\centering\small
\caption{Final BRiDCT configuration: session-2 median native time in microseconds per natural-layout array. Bold marks the smallest displayed value in each row. F: forward; I: inverse; RT: identity round trip. Ooura SIMD is a study-specific adaptation. All entries in a row come from the same matched campaign. A dash denotes an unavailable route.}\label{tab:times}
\begin{tabular}{rlrrrrrr}
\toprule
Side & Task & BRiDCT & vDSP & FFTW & Ooura & Ooura SIMD & libjxl\\\midrule
8 & F & \textbf{0.022} & -- & 0.053 & 0.249 & 0.025 & 0.051\\
8 & I & \textbf{0.023} & -- & 0.052 & 0.245 & 0.024 & 0.044\\
8 & RT & \textbf{0.031} & -- & 0.097 & 0.493 & 0.035 & 0.066\\
16 & F & \textbf{0.091} & 0.382 & 0.573 & 0.856 & 0.097 & 0.231\\
16 & I & \textbf{0.091} & 0.544 & 0.608 & 0.857 & 0.101 & 0.161\\
16 & RT & \textbf{0.148} & 0.906 & 1.124 & 1.707 & 0.164 & 0.314\\
32 & F & \textbf{0.432} & 1.976 & 2.178 & 3.760 & -- & 1.370\\
32 & I & \textbf{0.488} & 2.541 & 2.223 & 3.699 & -- & 0.997\\
32 & RT & \textbf{0.762} & 4.359 & 4.260 & 7.419 & -- & 2.010\\
64 & F & \textbf{2.771} & 4.827 & 8.912 & 13.200 & -- & 7.043\\
64 & I & \textbf{2.765} & 6.609 & 8.781 & 13.178 & -- & 5.201\\
64 & RT & \textbf{5.056} & 11.011 & 17.271 & 26.055 & -- & 10.673\\
128 & F & \textbf{14.821} & 19.685 & 42.467 & 55.499 & -- & 45.252\\
128 & I & \textbf{15.518} & 25.346 & 41.974 & 54.470 & -- & 36.264\\
128 & RT & \textbf{26.163} & 42.504 & 81.814 & 107.490 & -- & 63.529\\
256 & F & \textbf{62.248} & 81.326 & 225.031 & 247.937 & -- & 248.633\\
256 & I & \textbf{68.768} & 101.242 & 225.758 & 244.086 & -- & 212.059\\
256 & RT & \textbf{120.090} & 171.441 & 439.680 & 480.719 & -- & 387.984\\
\bottomrule
\end{tabular}

\end{table}
Supplementary Figure S1 includes the public floating-point codec routes and the study-specific small-block Ooura controls, measured against the same final BRiDCT build.

Figure~\ref{fig:interventions} reports the development ablation that motivated the final design, rather than another ranking of the final library. It follows source changes at fixed FMA policy in the original matched campaign. At 128, replacing full-array transposes with bands reduces the forward time from $41.69$ to $22.59\,\mu$s; padding gives $18.61\,\mu$s, and blocked recursion $15.15\,\mu$s. At 256 the sequence is $165.34$, $104.37$, $83.36$, and $63.13\,\mu$s. Each effect is conditional on the preceding variant: the ratios are not independent cache, permutation or spill costs. Smaller arrays can favor simpler organizations. Assembly inspection finds fewer static spill/reload sites in blocked functions, but supplies no dynamic traffic count. Combining the two one-dimensional passes into a single arithmetic graph reduced operations but did not reduce runtime in the tested variant. Execution organization and arithmetic count must therefore be evaluated together.

\begin{figure}[!htbp]
\centering\includegraphics[width=.98\textwidth]{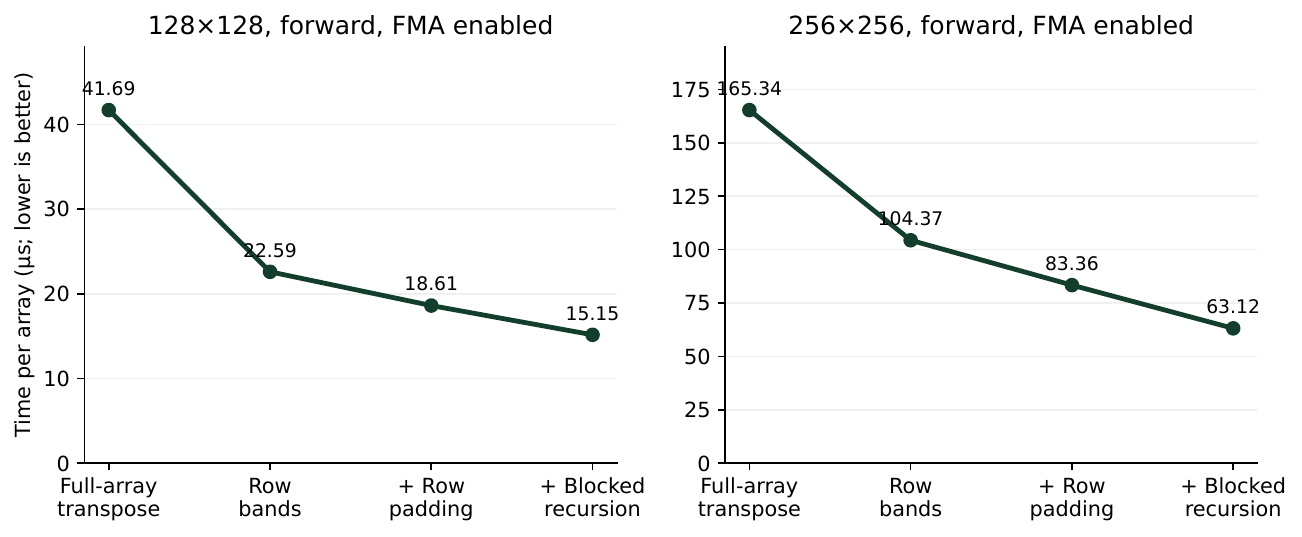}
\caption{Forward-transform interventions at fixed FMA policy and natural layout: band buffers, padded stride, then blocked recursion. Differences measure each complete source change, not isolated cache or spill costs. These historical variants explain the design choices; final-library comparisons are measured separately in Figure~\ref{fig:natural}.}\label{fig:interventions}
\end{figure}

\subsection{Complete-call performance of Python interfaces}\label{sec:publicresults}
The compiled NumPy interface exposes reusable plans, allocated or supplied outputs and a C loop over batches. Figure~\ref{fig:publicpython} shows complete-call performance under Section~\ref{sec:pythonprotocol}'s contracts. BRiDCT has lower medians in all 60 tested cases per interface against SciPy, DUCC, cached pyFFTW and planned pyFFTW, in both sessions. The separate OpenCV campaign favors BRiDCT in all 130 call contracts in each session. The minimum OpenCV/BRiDCT ratio over all contracts in both sessions is $3.90$. These complete-call ratios include the prescribed output ownership and batch handling.

\begin{figure}[!htbp]
\centering\includegraphics[width=.98\textwidth]{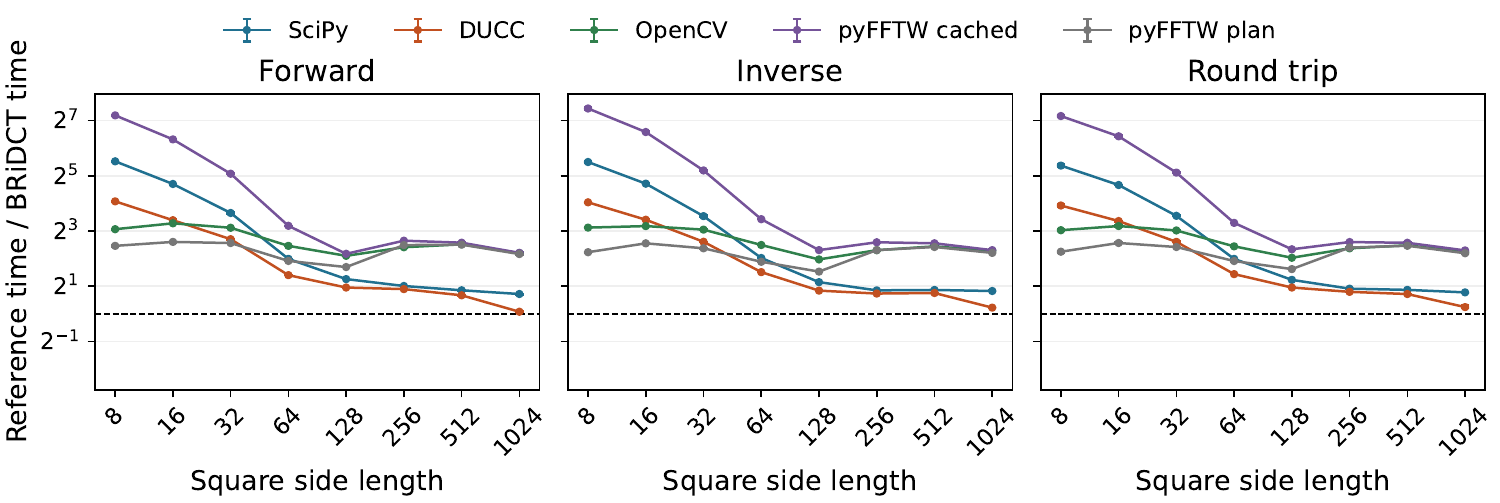}
\caption{Final compiled NumPy interface: complete Python-call reference/BRiDCT ratio, one image per call, session~2. The round trip uses separate forward and inverse calls for both methods. Each curve uses paired measurements within its own comparison; OpenCV and the other libraries are measured in separate sessions, so absolute times are not pooled. Bars are descriptive paired-bootstrap 95\% intervals; call, allocation, copies and normalization are included. Cached and planned pyFFTW are distinct interfaces to FFTW.}\label{fig:publicpython}
\end{figure}
\begin{samepage}
These complete-call gains include interface and allocation behavior. They do not rank native kernels: C BRiDCT and Python OpenCV have different call boundaries. Supplementary Section S2 provides the tables.
\par\end{samepage}

\section{Numerical Verification}\label{sec:validation}
A forward and inverse implementation can share an error and still pass a round trip. We therefore compare all three tasks independently with a float64 cosine-matrix oracle. A deterministic dataset of 474 float32 arrays contains 420 core cases and 54 boundary diagnostics across the six primary sizes. Core inputs include impulses, cosine modes, cancellation, neighboring floats and weak perturbations of a strong constant, with sampled amplitudes from $2^{-100}$ to $2^{90}$. Boundary cases probe subnormals and extreme finite amplitudes. Figure~\ref{fig:corpus} illustrates these inputs. Supplementary Section S4 records the data and scripts.

\begin{figure}[!htbp]
\centering\includegraphics[width=.98\textwidth]{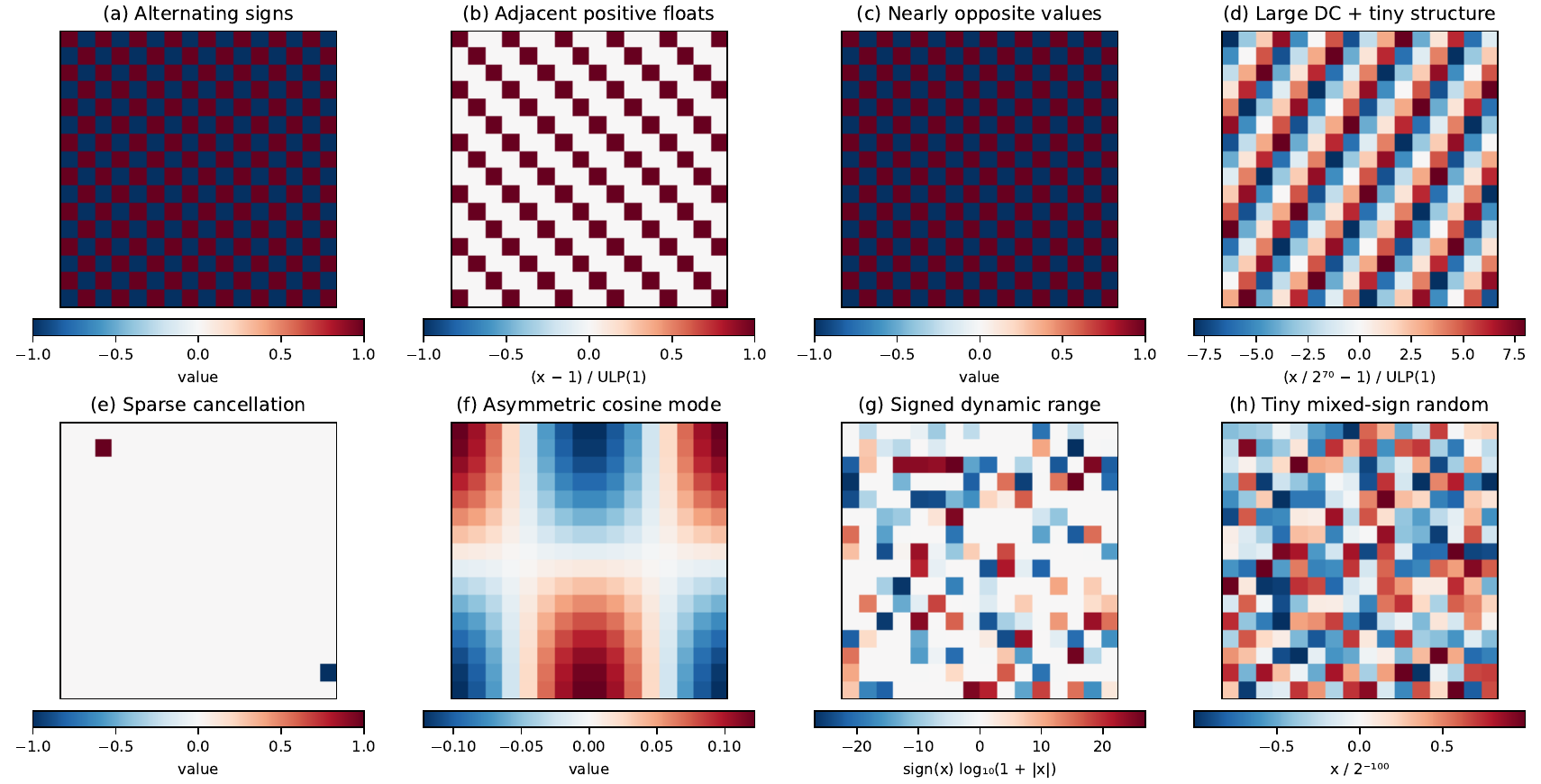}
\caption{Representative $16\times16$ verification inputs. ULP is the spacing to the next float32 value. Display rescalings, stated below each panel, are not applied during testing. Exact stored values distinguish nearly cancelling neighbors and weak components that may look alike in an image.}\label{fig:corpus}
\end{figure}

For input $X$, reference output $Y$ and computed output $\widehat Y$, the core criterion is
\begin{equation}
 e_{\mathrm{rel}}=\frac{\norm{\widehat Y-Y}_F}{\norm{X}_F}< 2\times10^{-5}.
 \label{eq:error}
\end{equation}
The Frobenius norm is the square root of the sum of squared entries. Zero inputs use an absolute check. Orthonormality makes \eqref{eq:error} an output-relative norm criterion as well. No denominator floor conceals errors on tiny inputs. This is an empirical acceptance threshold, not an analytical bound, and the float64 oracle has its own rounding error.

Each core check tests one input and one transform direction with the final library's fixed policies. All 1\,260 core checks pass, with maximum discrepancy $2.251\times10^{-7}$ (Table~\ref{tab:numerical}). Inputs and guard values surrounding buffers remain unchanged. Separate procedural tests of the 64 supported shapes pass 1\,152 required C checks. Together with the dataset this gives 2\,412 required checks, with maximum discrepancy $2.325\times10^{-7}$. The compiled Python interface independently passes 1\,152 cases with allocated and supplied outputs.

Injected scaling, sign, orientation and component-loss faults are detected in all 78 combinations of size, direction and fault on at least one input. This tests the criterion's sensitivity, not universal defect detection.

\begin{table}[!htbp]
\centering\small
\caption{Final-library numerical audit by array side length. Maximum error refers to the core inputs and the input-relative Frobenius criterion in \eqref{eq:error}. The last column reports failed boundary diagnostics over all boundary checks; these extreme-range cases are not included in the core accuracy claim.}\label{tab:numerical}
\begin{tabular}{rrrr}
\toprule
$N$ & Core checks & Maximum $e_{\mathrm{rel}}$ & Diagnostic failures \\
\midrule
8 & 210 & $1.37\times10^{-7}$ & 12/27 \\
16 & 210 & $1.39\times10^{-7}$ & 14/27 \\
32 & 210 & $1.60\times10^{-7}$ & 16/27 \\
64 & 210 & $1.98\times10^{-7}$ & 16/27 \\
128 & 210 & $1.92\times10^{-7}$ & 16/27 \\
256 & 210 & $2.25\times10^{-7}$ & 16/27 \\
\bottomrule
\end{tabular}

\end{table}

\begin{samepage}
Boundary diagnostics show 90 failures among 162 checks: unnormalized intermediate stages can overflow even when the final orthonormal answer is representable, while subnormals can suffer large relative underflow/rounding errors. A strong constant can also hide weak-component loss: a weak-perturbation case has about 8.95\% relative discrepancy in its nonconstant coefficients despite passing the global norm criterion. Applications depending on weak spectral detail therefore need component-specific checks. Buffer guards and this finite dataset do not establish universal numerical or memory safety.
\par\end{samepage}

\section{Discussion and Limits}
\subsection{Larger squares, rectangles, and unsupported dimensions}\label{sec:extensions}
For $X\in\RR^{h\times w}$, the rectangular operator is $Q_hXQ_w^{\mathsf T}$. The final library supports independent power-of-two axis lengths from 8 through 1024. The two lengths need not match. The native campaign includes larger squares, reversed rectangles, $16\times32$ and $8\times64$, and explicit availability probes with an axis of length 26. All 2\,192 direction checks of the available native routes pass before timing; these include the references as well as BRiDCT. The final library's broader numerical checks are reported in Section~\ref{sec:validation}.

The native protocol of Section~\ref{sec:nativeprotocol} applies unchanged. Supplementary Section S3 gives the times and ratios. At $512\times512$, BRiDCT is faster than the tested vDSP route in all three tasks in session~2. At $1024\times1024$, BRiDCT is faster than vDSP only for the inverse. Orientation matters: the $8\times64$ and $64\times8$ BRiDCT round trips take $0.597$ and $0.834\,\mu$s, respectively. vDSP also leads for the $512\times1024$ forward transform. These observations do not establish a ranking for every rectangle.

Non-power-of-two axes remain unsupported: BRiDCT rejects $32\times26$ and $26\times32$, whereas FFTW computes those transforms. This is a compatibility limit, not a numerical failure. Zero-padding would change the DCT and is not used as a substitute.

\subsection{Scope of the measured gains}
The measured gains concern the specified hardware, interfaces and warmed workloads. They do not establish a universal fastest-DCT ranking. Input layout, first-use costs and the fraction of application time spent in transforms can change the practical benefit. Energy consumption and dynamic memory traffic are unmeasured.

\section{Future Work}
Multiplatform work will tune register blocking and memory layout while retaining Shao--Johnson and the common numerical contract. Priorities include a controlled final-release Ubuntu performance evaluation, large-format Windows performance, Linux ARM64 and additional compilers, including MSVC. Supplementary Section S5 records current checks and the compiler, header and fallback-kernel issues encountered. Continuous integration (CI) checks builds and accuracy on macOS, Ubuntu and Windows; performance evaluation requires controlled-machine measurements.

Further comparisons should cover Intel IPP, alternative Accelerate routes, FFTW/SPIRAL-generated kernels and MATLAB under matched precision. MATLAB's documented double output requires accounting for precision and MEX call costs.\footnote{MATLAB API: \url{https://www.mathworks.com/help/images/ref/dct2.html}.} Interface controls, hardware counters and first-use/streaming workloads can separate call, allocation and memory effects. Large and rectangular transforms, exact non-power-of-two support and numerical-range protection remain optimization targets. Full applications and GPU implementations \cite{obukhov2008discrete} require end-to-end measurements, including spectral operations, data transfers and scheduling.

\section{Conclusion}
BRiDCT delivers fast, numerically verified two-dimensional DCTs by combining the Shao--Johnson factorization with SIMD lane organization, banded intermediate storage, padding and register blocking. It demonstrates how adapting execution to the processor can accelerate the exact mathematical transform beyond what arithmetic counts alone predict.

On the tested Apple M3 Max, BRiDCT is approximately $1.3$--$16.0\times$ faster than the tested general-purpose DCT implementations---FFTW, general Ooura and Apple's Accelerate vDSP route---across their available forward, inverse and round-trip cases from $8\times8$ to $256\times256$. It also outperforms the tested public floating-point codec routes at their available sizes. The native advantage over these general-purpose routes extends to the three tested $512\times512$ tasks in both sessions. Its compiled NumPy interface outperforms OpenCV in every tested call contract, including squares up to $1024\times1024$ and two rectangles, with gains of at least $3.90\times$ including call and allocation costs. Native $1024\times1024$ transforms remain a target for further optimization.

\section*{Supplementary Material, Code and Data Availability}\phantomsection\label{sec:availability}
The accompanying \emph{Supplementary Methods and Results} details the protocols, numerical diagnostics, extended-shape comparisons and portability checks.

The source code, its version history, the 474-input verification dataset and the recorded benchmark measurements are publicly available on GitHub: \url{https://github.com/antmoev/bridct}. Versioned releases provide source packages with build and test scripts for macOS, Ubuntu and Windows, as well as separate dataset and measurement archives. \href{https://github.com/antmoev/bridct/releases/tag/v0.2.0-dev.7-r1}{Release v0.2.0-dev.7-r1} preserves the numerical implementation and input arrays evaluated in this report. Original code is distributed under BSD-3-Clause and the dataset under CC BY 4.0; third-party components retain their licenses and attribution.

\end{document}